\def\myendproof{{\hfill \vbox{\hrule\hbox{%
   \vrule height1.3ex\hskip0.8ex\vrule}\hrule }}\par}
\newtheorem{theorem}{Theorem}
\newtheorem{lemma}[theorem]{Lemma}
\newenvironment{proof}{{\it Proof. }}{\myendproof}
\newcommand{\setof}[1]{\{{#1}\}}
\newcommand{\Xomit}[1]{}
\title{{\bf Linear-Time Algorithms for the Paired-Domination Problem in Interval Graphs and Circular-Arc Graphs}}
\author{Ching-Chi Lin\thanks{Department of Computer Science and Engineering,
                             National Taiwan Ocean University,
                             Keelung 20224, Taiwan. Corresponding author.
                             Email: lincc@mail.ntou.edu.tw}
        \and
        Hai-Lun Tu\thanks{Department of Computer Science and
                          Information Engineering,
                          National Taiwan University,
                          Taipei 10617, Taiwan.
                          Email:d95019@csie.ntu.edu.tw}
        }
\date{\today}
\begin{document}
\maketitle

\begin{abstract}
In a graph $G$, a vertex subset $S\subseteq V(G)$ is said to be a
dominating set of $G$ if every vertex not in $S$ is adjacent to a
vertex in $S$. A dominating set $S$ of a graph $G$ is called a
paired-dominating set if the induced subgraph $G[S]$ contains a
perfect matching. The paired-domination problem involves finding a
smallest paired-dominating set of $G$. Given an intersection model
of an interval graph $G$ with sorted endpoints, Cheng~{\em et
al.}~\cite{CKN07} designed an $O(m+n)$-time algorithm for interval
graphs and an $O(m(m+n))$-time algorithm for circular-arc graphs.
In this paper, to solve the paired-domination problem in interval
graphs, we propose an $O(n)$-time algorithm that searches for a
minimum paired-dominating set of $G$ incrementally in a greedy
manner. Then, we extend the results to design an algorithm for
circular-arc graphs that also runs in $O(n)$ time.

\bigskip

\noindent \textbf{Keywords:} paired-domination problem, perfect
matching, interval graph, circular-arc graph.

\end{abstract}

\newpage

\def\skippt{23pt}
\baselineskip \skippt

\section{Introduction}\label{section:intro}
The museum protection problem can be accurately represented by a
graph $G = (V(G),E(G))$. The vertex set of $G$, denoted by $V(G)$,
represents the sites to be protected; and the edge set of $G$,
denoted by $E(G)$, represents the set of protection capabilities.
There exists an edge $xy$ connecting vertices $x$ and $y$ if a
guard at site $x$ is capable of protecting site $y$ and vice
versa. In the classical domination problem, it is necessary to
minimize the number of guards such that each site has a guard or
is in the protection range of some guard. For the
paired-domination problem, in addition to protecting the sites,
the guards must be able to back each other up~\cite{HS98}.
Throughout this paper, we let $n = |V(G)|$ and $m = |E(G)|$.

In a graph $G$, a vertex subset $S\subseteq V(G)$ is said to be a
{\em dominating set} of $G$ if every vertex not in $S$ is adjacent
to a vertex in $S$. A dominating set $S$ of a graph $G$ is called
a paired-dominating set if the induced subgraph $G[S]$ contains a
perfect matching. The paired-domination problem involves finding a
smallest paired-dominating set of $G$. Haynes and
Slater~\cite{HS98} defined the paired-domination problem and
showed that it is NP-complete in general graphs. More recently,
Chen~{\em et al.}~\cite{Chen10} demonstrated that the problem is
also NP-complete in bipartite graphs, chordal graphs, and split
graphs. Panda and Pradhan~\cite{Panda12} strengthened the above
results by showing that the problem is NP-complete for perfect
elimination bipartite graphs. In addition, McCoy and
Henning~\cite{McCoy09} investigated variants of the
paired-domination problem in graphs.


Meanwhile, several polynomial-time algorithms have been developed
for some special classes of graphs such as tree graphs, interval
graphs, strongly chordal graphs, and circular-arc graphs.
Qiao~{\em et al.}~\cite{QKCD03} proposed an $O(n)$-time algorithm
for tree graphs; Kang~{\em et al.}~\cite{KSC04} presented an
$O(n)$-time algorithm for inflated trees; Chen~{\em et
al.}~\cite{Chen09} designed an $O(m+n)$-time algorithm for
strongly chordal graphs; and Cheng~{\em et al.}~\cite{CKS09}
developed an $O(mn)$-time algorithm for permutation graphs. To
improve the results in~\cite{CKS09}, Lappas~{\em et
al.}~\cite{Lappas09} introduced an $O(n)$-time algorithm. In
addition, Hung~\cite{Hung12} described an $O(n)$-time algorithm
for convex bipartite graphs; Panda and Pradhan~\cite{Panda12}
proposed an $O(m+n)$-time algorithm for chordal bipartite graphs;
Chen~{\em et al.}~\cite{Chen10} introduced $O(m+n)$-time
algorithms for block graphs and interval graphs; and Cheng~{\em et
al.}~\cite{CKN07} designed an $O(m+n)$-time algorithm for interval
graphs and an $O(m(m+n))$-time algorithm for circular-arc graphs.
In this paper, given an intersection model of interval graph $G$
with sorted endpoints, we improve the above results with time
complexity $O(n)$ for interval graphs and circular-arc graphs.

Several variants of the classic domination problem, such as the
weighted domination, edge domination, independent domination,
connected domination, locating domination, and total domination
problems, have generated a great deal of research interest in
recent decades~\cite{Chang04,Haynes98,Haynes98-2,Hedetniemi91}. It
has been proved that the above problems are NP-complete in general
graphs but they yield polynomial-time results in some special
classes of
graphs~\cite{Bertossi88,D'Atri88,Chang97,Lu02,Ramalingam88,Chang98}.
In particular, these variants have been studied  intensively in
interval and circular-arc graphs~\cite{Ramalingam88,Chang98}.

For weighted interval graphs, Ramalingam and
Rangan~\cite{Ramalingam88} proposed a unified approach to solve
the independent domination, domination, total domination and
connected domination problems in $O(n+m)$ time. Subsequently,
Chang~\cite{Chang98} developed an $O(n)$-time algorithm for the
independent domination problem, an $O(n)$-time algorithm for the
connected domination problem and an $O(n \log \log n)$-time
algorithm for the total domination problem in weighted interval
graphs. The author also extended the results to derive
$O(m+n)$-time algorithms for the same problems in circular-arc
graphs. Moreover, Hsu and Tsai~\cite{HT91} developed an
$O(n)$-time algorithm for the classic domination problem in
circular-arc graphs. The algorithm, which utilizes a greedy
strategy, motivated the algorithms proposed in this paper. Note
that all the above algorithms assume that an intersection model of
$G$ with sorted endpoints is given.

In this paper, we show that the paired-domination in interval
graphs and circular-arc graphs is solvable in linear time. More
precisely, given an intersection model of a circular-arc graph $G$
with sorted endpoints, we propose an $O(n)$-time algorithm that
produces a minimum paired-dominating set of $G$. Moreover, because
circular-arc graphs are a natural generalization of interval
graphs, the paired-domination problem in the latter can also be
solved in $O(n)$ time.

The remainder of this paper is organized as follows. In
Section~\ref{section:algo-interval}, we introdece an $O(n)$-time
algorithm for interval graphs; and in
Section~\ref{section:algo-cir}, we extend the result to derive an
$O(n)$-time algorithm for circular-arc graphs.
Section~\ref{section:conclusion} contains some concluding remarks.

\section{The Proposed Algorithm for Interval Graphs
                              \label{section:algo-interval}}
To find a minimum paired-dominating set of an interval graph $G$,
we designed an $O(n)$-time algorithm that derives the set
incrementally in a greedy manner. Before describing the approach
in detail, we introduce some preliminaries for interval graphs.

\bigskip
\begin{figure}[thb]
\centerline{\input{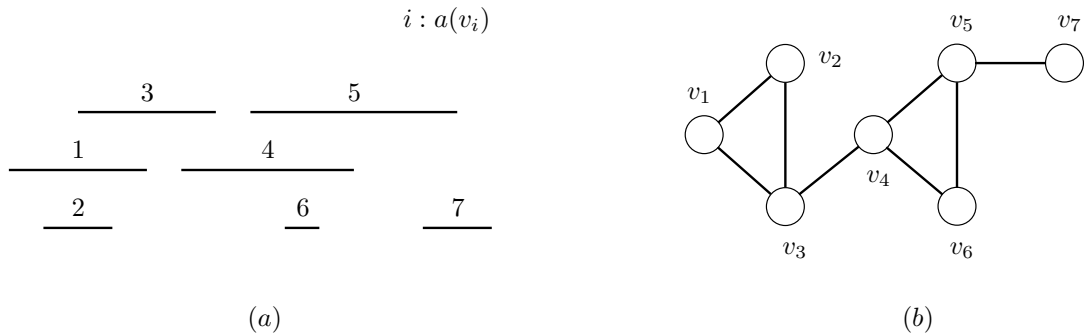}} \caption{$(a)$ A family of intervals on
a real line. $(b)$ The corresponding interval graph $G$ for the
family of intervals in $(a)$.} \label{fig:1}
\end{figure}

A graph $G$ is deemed an {\em interval graph} if there is a
one-to-one correspondence between its vertices and a family of
intervals, $I$, on a real line, such that two vertices in the
graph have an edge between them if and only if their corresponding
intervals overlap. Interval graphs have received considerable
attention because of their application in the real world. Booth
and Lueker~\cite{Booth76} designed an algorithm that can recognize
interval graphs in $O(n + m)$ time. As a byproduct, an
intersection model $I$ of an interval path graph $G$ can be
constructed in $O(n + m)$ time. In the remainder of this section,
we assume that $G$ is an interval graph with $V(G)=\setof{v_1,
v_2, \ldots, v_n}$, where $n\ge 3$. We also assume that an
intersection model $I$ is available to $G$, as shown by the examle
in Figure~\ref{fig:1}, where Figure~\ref{fig:1}$(b)$ depicts the
corresponding interval graph $G$ for the family of intervals in
Figure~\ref{fig:1}$(a)$.

The {\em neighborhood} $N_G(v)$ of a vertex $v$ is the set of all
vertices adjacent to $v$ in $G$; and the {\em closed neighborhood}
$N_G[v] = \{v\} \cup N_G(v)$. For each $v\in V(G)$, let $a(v)$
denote the corresponding interval of $v$ in $I$. Each interval is
represented by $[\ell(v), r(v)]$, where $\ell(v)$ and $r(v)$ are,
respectively, the {\em left endpoint} and the {\em right endpoint}
of $a(v)$. It is assumed that the left endpoint $\ell(v_i)$ is on
the left of the left endpoint $\ell(v_{i+1})$ for all $ 1 \le i
\le n -1$. Without loss of generality, we assume that all interval
endpoints (i.e., $\ell(v)$ and $r(v)$) are distinct. In addition,
each interval endpoint is assigned a positive integer between $1$
and $2n$ in ascending order in a left-to-right traversal. For any
two endpoints $x$ and $y$, $x$ is said to be lower than $y$,
denoted by ``$x < y$", if its label is lower than the label of
$y$, i.e., $x$ lies on the left of $y$ in $I$; otherwise, $x$ is
said to be greater than or equal to $y$, denoted by ``$x \ge y$".

\subsection{The algorithm
                              \label{section:outline-algo-int}}
As mentioned earlier, our algorithm for finding a minimum
paired-dominating set of an interval graph utilizes a greedy
strategy. With $O(n)$-time preprocessing, the algorithm traverses
the intersection mode $I$ of an interval graph $G$ from left to
right exactly once. For a subset $S\subseteq V(G)$, a vertex $v$
in $G$ is said to be the {\em next undominated} vertex with
respect to $S$ if $a(v)$ has the leftmost right endpoint among the
corresponding intervals of the vertices that are not in $S$ or
adjacent to any vertex in $S$. For each $v \in V(G)$, let the {\em
partner} of $v$, denoted by $p(v)$, be the neighbor of $v$ such
that $a(p(v))$ has the rightmost right endpoint in $I$ among the
corresponding intervals of $N_G(v)$. For the example of
Figure~\ref{fig:1}, we have $p(v_3) = v_{4}$, $p(v_5) = v_{7}$,
and $p(v_7) = v_{5}$. Initially, we set $S=\emptyset$. Then, the
algorithm iteratively finds the next undominated vertex $v$ with
respect to $S$ and adds a pair $(p(v), p(p(v))$ to $S$ until every
vertex not in $S$ is adjacent to a vertex in $S$. The steps of the
algorithm are detailed in Algorithm 1.

\vspace{0.5pt}
\begin{algorithm}
\caption{Finding a minimum paired-dominating set in an interval
graph}
\begin{algorithmic} [1]
\REQUIRE An intersection model $I$ of an interval graph
$G$ with sorted endpoints.
\ENSURE A minimum paired-dominating set $S$ of
$G$.
\STATE let $S \leftarrow \emptyset$;
\REPEAT
\STATE find the next undominated vertex $v$ with
respect to $S$;
\STATE let $S \leftarrow S \cup \setof{p(v),p(p(v))}$;
\UNTIL every vertex not in $S$ is adjacent to a vertex
in $S$
\RETURN $S$;
\end{algorithmic}
\end{algorithm}

For the example in Figure~\ref{fig:1}, the algorithm generates a
minimum paired-dominating set $S = \{v_3,v_4,v_5,v_7\}$ of $G$. In
the following, we demonstrate the correctness of the algorithm;
and then describe an $O(n)$-time implementation of the algorithm
in the next subsection. First, we introduce some necessary
notations. Let $G[S]$ denote the subgraph of $G$ induced by a
subset $S$ of $V(G)$; and let $\tilde{N}_{G_{i}}(v_x)$ represent
the set of all vertices adjacent to $v_x$ in $G[\setof{v_x, v_i,
v_{i+1}, \ldots, v_n}]$. In addition, let $v_{i^*}$ be the vertex
in $V(G)$ such that $\ell(v_{i^*})$ is the first left endpoint
encountered in a left-to-right traversal from $r(v_i)$ in $I$. For
the example in Figure~\ref{fig:1}, we have $v_{2^*} = v_{4}$,
$v_{3^*} = v_{5}$, $\tilde{N}_{G_{2^*}}(v_3) = \setof{v_4}$ and
$\tilde{N}_{G_{3^*}}(v_4) = \setof{v_5,v_6}$.

\vspace{-2pt}

\begin{lemma}\label{lemma:partial_dom}
If $v_x\in N_G(v_i)$ and $v_y \in N_G(v_x)$, we have
\\ $~\hspace{2.1cm} \setof{\tilde{N}_{G_{i^*}}(v_x) \cup \tilde{N}_{G_{i^*}}(v_y)} \subseteq
\setof{\tilde{N}_{G_{i^*}}(p(v_i)) \cup \tilde{N}_{G_{i^*}}(p(p(v_i)))}$ \quad for $1 \le i \le n$.
\end{lemma}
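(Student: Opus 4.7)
The plan is to argue directly from the definitions, anchored on one geometric observation: every $v_k$ with $k \geq i^*$ satisfies $\ell(v_k) \geq \ell(v_{i^*}) > r(v_i)$, so $v_k$ starts strictly to the right of $v_i$, and for any vertex $u$ with $\ell(u) \leq r(v_i)$ the adjacency of $v_k$ and $u$ reduces to the single condition $\ell(v_k) \leq r(u)$ (since $\ell(u) < \ell(v_k)$ already).

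The first inclusion $\tilde{N}_{G_{i^*}}(v_x) \subseteq \tilde{N}_{G_{i^*}}(p(v_i))$ then falls out quickly: $v_x \in N_G(v_i)$ gives $\ell(v_x) \leq r(v_i)$, and by the definition of partner $r(v_x) \leq r(p(v_i))$; so any $v_k \in \tilde{N}_{G_{i^*}}(v_x)$ satisfies $\ell(v_k) \leq r(v_x) \leq r(p(v_i))$, and since $p(v_i) \in N_G(v_i)$ also has $\ell(p(v_i)) \leq r(v_i)$, the same reduction places $v_k$ in $\tilde{N}_{G_{i^*}}(p(v_i))$.

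For the second inclusion $\tilde{N}_{G_{i^*}}(v_y) \subseteq \tilde{N}_{G_{i^*}}(p(v_i)) \cup \tilde{N}_{G_{i^*}}(p(p(v_i)))$, I would split on the position of $v_y$ relative to $v_i$. If $v_y = v_i$ or $r(v_y) < \ell(v_i)$, then no $v_k$ with $k \geq i^*$ can overlap $v_y$ and the left side is empty. If $v_y \in N_G(v_i)$, the argument of the preceding paragraph applies verbatim with $v_y$ in place of $v_x$. The genuine case is $\ell(v_y) > r(v_i)$: then $v_y$ itself has index $\geq i^*$, and $v_y \in N_G(v_x)$ combined with the previous paragraph yields $v_y \in \tilde{N}_{G_{i^*}}(p(v_i))$, whence $v_y \in N_G(p(v_i))$ and $r(v_y) \leq r(p(p(v_i)))$ by the partner property applied to $p(v_i)$.

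In this last case, for each $v_k \in \tilde{N}_{G_{i^*}}(v_y)$ I would split further on whether $\ell(v_k) \leq r(p(v_i))$: the ``yes'' branch puts $v_k$ directly into $\tilde{N}_{G_{i^*}}(p(v_i))$ by the same reduction as before, while in the ``no'' branch the chain $\ell(v_y) \leq r(p(v_i)) < \ell(v_k)$ forces $\ell(v_k) \leq r(v_y) \leq r(p(p(v_i)))$, and the overlap of $p(p(v_i))$ with $p(v_i)$ yields $\ell(p(p(v_i))) \leq r(p(v_i)) < \ell(v_k)$, so $\ell(v_k)$ sits inside $a(p(p(v_i)))$ and $v_k \in \tilde{N}_{G_{i^*}}(p(p(v_i)))$. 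The main obstacle I anticipate is the bookkeeping in this final subcase: juggling the interplay of the endpoints $\ell(p(p(v_i)))$, $\ell(v_y)$, $r(p(v_i))$, $\ell(v_k)$, $r(v_y)$ and $r(p(p(v_i)))$ is delicate, but the argument uses nothing beyond the greedy ``rightmost right endpoint'' property built into the definition of $p(\cdot)$.
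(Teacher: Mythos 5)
Your proof is correct and rests on the same geometric fact as the paper's: $a(p(v_i)) \cup a(p(p(v_i)))$ covers the whole segment from $r(v_i)$ out to $\max\{r(v_x), r(v_y)\}$, so the left endpoint of any $v_k$ in the left-hand side must land inside one of the two partner intervals. Your final case additionally makes explicit a step the paper only asserts --- that $r(v_y) \le r(p(p(v_i)))$ because $v_y$, when it starts to the right of $r(v_i)$, is itself a neighbor of $p(v_i)$ --- so your write-up is, if anything, more complete than the original.
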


\vspace{-2pt}

\begin{proof}
By the definition of $p(v_i)$, we have
$\text{max}\setof{r(p(v_i)), r(p(p(v_i)))} \ge
\text{max}\setof{r(v_x), r(v_y)}$. For the case where
$r(p(p(v_i)))
> r(p(v_i))$, the segment $[r(v_i), r(p(p(v_i)))]$ is contained
in $a(p(v_i)) \cup a(p(p(v_i)))$. This implies that if $v_j \in
\tilde{N}_{G_{i^*}}(v_x) \cup \tilde{N}_{G_{i^*}}(v_y)$, we also
have $v_j\in \tilde{N}_{G_{i^*}}(p(v_i)) \cup
\tilde{N}_{G_{i^*}}(p(p(v_i)))$. The arguments are similar in the
case where $r(p(p(v_i))) < r(p(v_i))$.
\end{proof}

\vspace{-2pt}

\begin{lemma} \label{lemma:interval-correctness}
Given an intersection model $I$ of an interval graph $G$ with
sorted endpoints, Algorithm $1$ outputs a minimum
paired-dominating set $S$ of $G$.
\end{lemma}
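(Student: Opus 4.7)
The plan is to prove (i) that the set $S$ returned by Algorithm~1 is a paired-dominating set of $G$, and (ii) that it has minimum cardinality. I would treat the two parts separately.

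For (i), the \textbf{until} clause of the algorithm directly certifies domination. The perfect matching in $G[S]$ arises by grouping the vertices of $S$ according to the iteration in which they were inserted: each iteration adds a pair $\{p(v), p(p(v))\}$, which is always an edge since $p(p(v))$ is by definition a neighbor of $p(v)$. A brief check shows that the pairs contributed by distinct iterations are vertex-disjoint; indeed, the next undominated vertex must lie strictly to the right of every vertex already dominated, which forces the newly inserted vertices to differ from those added earlier. Hence $G[S]$ admits a perfect matching.

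For (ii), I would use an exchange argument driven by Lemma~\ref{lemma:partial_dom}. Let $S^*$ be any minimum paired-dominating set equipped with a fixed perfect matching $M^*$. I would show by induction on the number of iterations that after $k$ iterations of Algorithm~1, there exists a paired-dominating set $S_k^*$ of the same cardinality as $S^*$ that contains every vertex inserted by the algorithm so far. At the inductive step, let $v_i$ be the next undominated vertex with respect to the algorithm's partial set. Since $S_{k-1}^*$ dominates $v_i$, some $v_x \in N_G[v_i]$ belongs to $S_{k-1}^*$, and its partner $v_y$ under $M^*$ is a neighbor of $v_x$. Lemma~\ref{lemma:partial_dom} then implies that replacing the matched pair $\{v_x, v_y\}$ by $\{p(v_i), p(p(v_i))\}$ preserves domination of every vertex with index at least $i^*$, while preserving the matching structure. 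Vertices of smaller index are already dominated by the algorithm's current partial set, so the swap yields a paired-dominating set of the same cardinality satisfying the invariant. When the loop terminates, the algorithm's set $S$ is itself contained in such an $S_k^*$, which gives $|S| \le |S^*|$.

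The main obstacle will be handling the exchange cleanly, since several degenerate subcases can arise: $v_x$ might equal $v_i$; $v_x$ or $v_y$ might already coincide with $p(v_i)$ or $p(p(v_i))$, in which case only the non-coincident half of the pair is actually swapped; and $p(p(v_i))$ might equal $v_i$ itself. Each of these has to be checked so that the perfect matching in $S_k^*$ is preserved after the swap. Once the case analysis is recorded, Lemma~\ref{lemma:partial_dom} supplies the essential domination content and the induction closes without further work.
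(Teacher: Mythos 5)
Your part (i) matches what the paper asserts without proof (the disjointness of pairs across iterations does hold, since each newly chosen undominated vertex $v$ forces $p(v)\notin S$ and a short interval argument rules out $p(p(v))$ colliding with earlier pairs), so that portion is fine. For part (ii), however, you take a genuinely different route from the paper, and your version has a concrete unaddressed failure mode. The paper does not maintain any set-containment invariant at all: it fixes an optimal $Z$ with an index-ordered matching, defines the scalar $A(R)=\max\{i\mid v_i\in R$ or $v_i$ is adjacent to $R\}$, and proves by induction that the greedy prefix always dominates at least as far to the right as the optimal prefix, $A(S_k)\ge A(Z_k)$, using Lemma~\ref{lemma:partial_dom} only to compare reach. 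Your exchange argument instead tries to morph $S^*$ into a minimum paired-dominating set containing the greedy's vertices while preserving both cardinality and a perfect matching.

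The gap is in the collision analysis. You list the cases where $p(v_i)$ or $p(p(v_i))$ coincides with $v_x$ or $v_y$, but not the case where $p(v_i)$ (or $p(p(v_i))$) already lies in $S_{k-1}^*\setminus\{v_x,v_y\}$ and is matched there to some third vertex $v_w$. In that situation the swap $\bigl(S_{k-1}^*\setminus\{v_x,v_y\}\bigr)\cup\{p(v_i),p(p(v_i))\}$ strictly decreases the cardinality and leaves $v_w$ with a partner that you now also want matched to $p(p(v_i))$, so the "same cardinality plus perfect matching" invariant breaks; repairing it (e.g., by re-choosing $v_x=p(v_i)$, $v_y=v_w$ and swapping only $v_w$ for $p(p(v_i))$) can cascade into further collisions. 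Relatedly, your step "some $v_x\in N_G[v_i]$ belongs to $S_{k-1}^*$ and its partner is $v_y$" silently needs the strengthened invariant that the greedy pairs inserted so far are matched \emph{to each other} in the maintained matching --- otherwise $v_y$ could be a previously committed greedy vertex and removing it would violate containment. All of this bookkeeping is exactly what the paper's reach-comparison sidesteps: since only the number $A(\cdot)$ is compared, no vertex is ever removed from or inserted into the optimal solution, and Lemma~\ref{lemma:partial_dom} is applied once per step with no case analysis. Your plan can likely be completed, but as written the inductive invariant is not preserved in the unlisted collision case, so the induction does not yet close.
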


\vspace{-2pt}

\begin{proof}
Suppose that the algorithm outputs $S=\{v_{s_1}, v_{s_{1'}},
\ldots, v_{s_x},v_{s_{x'}}\}$, where $v_{s_i}$ and $v_{s_{i'}}$
are added to $S$ in the $i$th iteration. Clearly, $S$ is a
paired-dominating set of $G$. We prove that $S$ is a minimum
paired-dominating set of $G$ as follows. Let $Z =\{v_{z_1},
v_{z_{1'}}, \ldots, v_{z_y},v_{z_{y'}}\}$ be a minimum
paired-dominating set of $G$ such that $(v_{z_1}, v_{z_{1'}}),
\ldots, (v_{z_y},v_{z_{y'}})$ is a perfect matching of $G[Z]$ and
$z_i < z_{i+1}$ for $1 \le i \le y -1$. In addition, let $S_i =
\{v_{s_1}, v_{s_{1'}},\ldots, v_{s_i},v_{s_{i'}}\}$ and $Z_i =
\{v_{z_1}, v_{z_{1'}},\ldots, v_{z_i},v_{z_{i'}}\}$. For a subset
$R \subseteq V(G)$, we define that $A(R) = \max\{i \mid v_i \in R$
$\text{~or~} v_i \text{~is adjacent to a vertex~} v_j\in R\}$. To
prove that $S$ is a minimum paired-dominating set of $G$, it is
sufficient to show that $A(S_i) \ge A(Z_i)$ for $1 \le i \le y$.

We prove the above statement by induction on $i$. By the
definitions of the next undominated vertex and $p(v)$, the
statement holds for $i = 1$, and we assume the statement holds for
$i =k$. Consider the case where $i = k + 1$. Let $v_j$ be the next
undominated vertex with respect to $\{v_{s_1}, v_{s_{1'}}, \ldots,
v_{s_k},v_{s_{k'}}\}$. Clearly, $j
> A(S_k) \ge A(Z_k)$. First, we consider the case where $v_j \not \in N_G(v_{z_{(k+1)}})
\cup N_G(v_{z_{(k+1)'}})$. By the definition of $v_j$, we have
$A(S_{k+1}) \ge j
> A(Z_{k+1})$. Next, we consider the case where $v_j \in
N_G(v_{z_{(k+1)}}) \cup N_G(v_{z_{(k+1)'}})$. According to
Lemma~\ref{lemma:partial_dom}, we have
$\setof{\tilde{N}_{G_{j^*}}(v_{z_{(k+1)}}) \cup
\tilde{N}_{G_{j^*}}(v_{z_{(k+1)'}})} \subseteq
\setof{\tilde{N}_{G_{j^*}}(v_{s_{(k+1)}}) \cup
\tilde{N}_{G_{j^*}}(v_{s_{(k+1)'}})}$. It follows that $A(S_{k+1})
\ge A(Z_{k+1})$ in both cases, so $x \le y$. By the definition of
$Z$, we have $y \le x$; thus, $x = y$. The lemma then follows.
\end{proof}

\subsection{An $O(n)$-time implementation of Algorithm $1$
                              \label{section:implement-int}}
For each $v \in V(G)$, let $next(v)$ denote the vertex in
$V(G)-N_G[v]$ whose corresponding right endpoint is the first
right endpoint encountered in a left-to-right traversal from
$r(v)$ in $I$. The algorithm traverses the intersection mode $I$
of an interval graph $G$ from left to right exactly once.
Therefore, to prove that the algorithm runs in $O(n)$ time, it
suffices to show that, with preprocessing in $O(n)$ time, the
algorithm takes $O(1)$ time to determine $p(v_i)$ and $next(v_i)$
for $1 \le i \le n$. In the following, we describe two $O(n)$-time
preprocessing procedures used to determine all $p(v_i)$ and
$next(v_i)$, respectively.

To obtain $p(v_i)$, we traverse the endpoints of $I$ from left to
right and maintain a variable $rightmost$ that represents the
interval containing the rightmost endpoint in the current stage.
Initially, we set $rightmost = v_1$. When a left endpoint
$\ell(v_i)$ is visited, we compare $r(v_i)$ with $r(rightmost)$.
If $r(v_i) > r(rightmost)$, we set $rightmost = v_i$; otherwise,
we do nothing. When a right endpoint $r(v_i)$ is visited, we set
$p(v_i) = rightmost$. Because there are $2n$ endpoints in $I$, the
procedure can be completed in $O(n)$ time.

Next, we describe the $O(n)$-time procedure used to determine
$next(v_i)$. Let $f(v_i)$ be the vertex in $V(G)$ such that
$\ell(f(v_i))$ is the first left endpoint encountered in a
left-to-right traversal from $r(v_i)$ in $I$. In addition, let
$c(v_i) = v_j$ be the vertex in $V(G)$ such that $r(v_j)$ is the
first right endpoint encountered in a left-to-right traversal from
$\ell(v_i)$ in $I$ with $j \ge i$. Then, $next(v_i)$ can be seen
as the composite function of $c$ and $f$, i.e., $next(v_i) =
c(f(v_i))$. The following discussion shows that, with $O(n)$-time
preprocessing, $c(v_i)$ and $f(v_i)$ can be determined in $O(1)$
time.

First, we describe an $O(n)$-time procedure to determine $f(v_i)$
for each $v_i \in V(G)$. Again, we traverse the endpoints of $I$
from left to right and maintain a set $P$. Initially, we set $P =
\emptyset$. When a right endpoint $r(v_i)$ is visited, we add
$v_i$ to $P$; and when a left endpoint $\ell(v_i)$ is visited, we
set $f(v_j) = v_i$ for each vertex $v_j$ in $P$ and set $P$ to be
empty. The arguments for determining $c(v_i)$ are similar. If we
reach a left endpoint $\ell(v_i)$, we insert $v_i$ into the queue
$Q$. When we find a right endpoint $r(v_i)$, we set $c(v_j)=v_i$
and remove $v_j$ from the queue for all $v_j$ with $j \le i$.
Because there are $2n$ endpoints in $I$, the above two procedures
can also be completed in $O(n)$ time.

Combining Lemma~\ref{lemma:interval-correctness} and above
discussion, we have the following theorem, which is one of the key
results presented in this paper.

\begin{theorem} \label{theorem:interval} Given an intersection
model $I$ of an interval graph $G$ with sorted endpoints,
Algorithm 1 outputs a minimum paired-dominating set $S$ of $G$ in
$O(n)$ time.
\end{theorem}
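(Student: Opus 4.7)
The plan is to combine the correctness statement already established in Lemma~\ref{lemma:interval-correctness} with a running-time analysis that accounts separately for (i) the preprocessing phase, (ii) the cost per iteration of the main loop, and (iii) the number of iterations. Correctness comes for free from the lemma, so the task reduces to showing that each of (i)--(iii) contributes only $O(n)$ to the total cost.

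For (i), I would lean on the three single-pass sweeps over the $2n$ sorted endpoints of $I$ described in Section~\ref{section:implement-int}: one computing $p(v_i)$ by maintaining the interval with the rightmost right endpoint seen so far, one computing $f(v_i)$ by buffering right-endpoint vertices and flushing them at the next left endpoint, and one computing $c(v_i)$ by buffering left-endpoint vertices in a queue and flushing them whenever their right endpoint is hit. Each sweep performs $O(1)$ amortized work per endpoint visited and per vertex inserted/removed, so all three tables finish in $O(n)$ time, and from $next(v_i) = c(f(v_i))$ we get $next(\cdot)$ by a single composition lookup.

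For (ii) and (iii), I would observe that a single iteration of the repeat loop performs only a constant number of table lookups: read the current next-undominated vertex $v$, read $p(v)$ and $p(p(v))$, insert them into $S$, and then advance a ``scan pointer'' to the vertex $w \in \{p(v),p(p(v))\}$ whose right endpoint is larger; the next undominated vertex at the following iteration is then simply $next(w)$. Each iteration therefore takes $O(1)$ time. Because the pointer strictly advances to the right in $I$ and two new vertices are added to $S$ at every iteration, at most $\lceil n/2 \rceil$ iterations can occur, bounding the main-loop work by $O(n)$.

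The main obstacle is really a bookkeeping one: arguing that the ``next undominated vertex with respect to $S$'' at iteration $k+1$ can be retrieved in $O(1)$ rather than rescanned. This is exactly what $next(\cdot)$ is engineered for, since by construction it jumps over the closed neighborhood $N_G[w]$ of the last vertex dominated in iteration $k$; combined with the monotone advance of the scan pointer, no endpoint is ever revisited. Assembling the correctness part from Lemma~\ref{lemma:interval-correctness} with these three $O(n)$ bounds yields the claimed complexity and completes the proof.
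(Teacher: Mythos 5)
Your proposal is correct and follows essentially the same route as the paper: correctness is delegated to Lemma~\ref{lemma:interval-correctness}, and the $O(n)$ bound comes from the three endpoint sweeps computing $p(\cdot)$, $f(\cdot)$, $c(\cdot)$ (hence $next(\cdot)=c(f(\cdot))$) plus $O(1)$ work per iteration of the main loop. If anything, you make explicit a step the paper leaves implicit --- that the next undominated vertex after adding $\{p(v),p(p(v))\}$ is exactly $next(w)$ for the member $w$ of that pair with the larger right endpoint, so the scan pointer only moves rightward --- which is a welcome clarification rather than a deviation.
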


\section{The Algorithm for Circular-arc Graphs
                              \label{section:algo-cir}}

In this section, we extend the previous results to derive an
$O(n)$-time algorithm for finding a minimum paired-dominating
set in a circular-arc graph. The algorithm also exploits a greedy strategy. A graph $G$ is deemed a {\em circular-arc graph} if there is a one-to-one correspondence between $V(G)$ and a set of arcs on a circle such that $(u, v)\in E(G)$ if and only if the corresponding arc of $u$ overlaps with the corresponding arc of $v$. An {\em intersection model} of $G$ is a circular ordering of its corresponding arc endpoints when moving in a counterclockwise direction around the circle. McConnell~\cite{McConnell03} proposed an $O(n+m)$-time algorithm that recognize a circular-arc graph $G$, and simultaneously obtains an intersection model of $G$ as a byproduct. In the following discussion, we assume that $(1)$ $G$ is a circular-arc graph such that $V(G) = \setof{v_1,v_2,\ldots,v_n}$ with $n \ge 3$; and $(2)$ the intersection mode $F$ of $G$ is available.

\hspace{1pt}

\begin{figure}[thb]
\centerline{\input{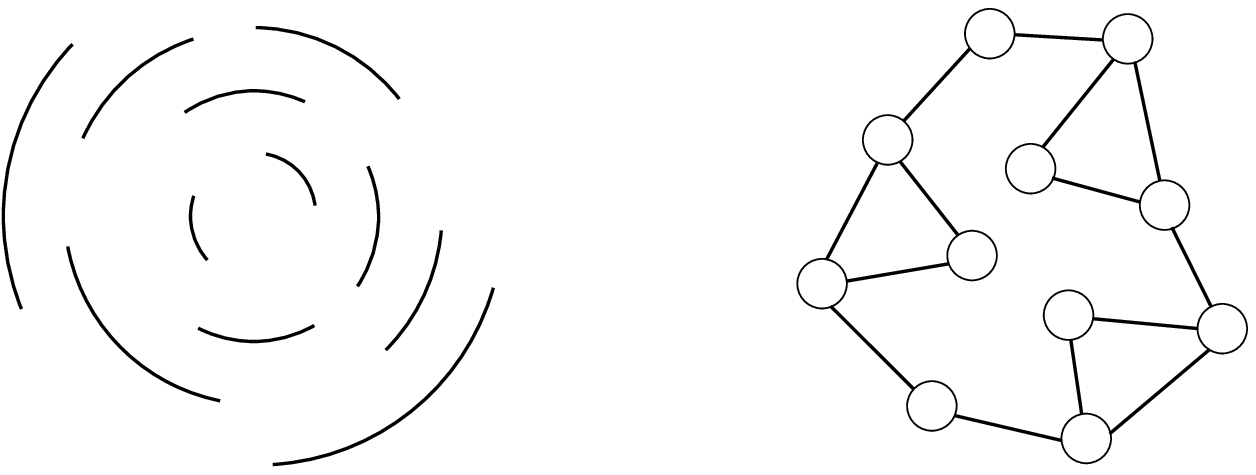}} \caption{$(a)$ A family of arcs on a
circle. $(b)$ The corresponding circular-arc graph $G$ for the
family of arcs in $(a)$.} \label{fig:2}
\end{figure}

For each $v\in V(G)$, let $a(v)$ denote the corresponding arc of
$v$ in $F$. Each arc is represented by $[h(v), t(v)]$, where
$h(v)$ is the {\em head} of $a(v)$, $t(v)$ is the {\em tail} of
$a(v)$, and $h(v)$ precedes $t(v)$ in a clockwise direction.
Moreover, for any subset $W$ of $V(G)$, we define $a(W)=\setof{a(v) \mid v\in W}$. It is assumed that all $h(v)$ and $t(v)$ are distinct and no single arc in $F$ covers the whole circle. All endpoints are assigned positive integers between $1$ and
$2n$ in ascending order in a clockwise direction.

In addition, we assume that $h(v_1) = 1$. We also
assume that $a(v_1)$ is chosen arbitrarily from $F$; and we let
$a(v_2), a(v_3), \ldots, a(v_n)$ be the ordering of arcs in
$F-\{a(v_1)\}$ such that $h(v_i)$ is encountered before $h(v_j)$
in a clockwise direction from $h(v_1)$ if $i<j$.
Figure~\ref{fig:2} shows an illustrative example, in which
Figure~\ref{fig:2}$(b)$ depicts the corresponding circular-arc
graph $G$ for the family of arcs in Figure~\ref{fig:2}$(a)$.
An ordering of the family of arcs is also provided. For
each $v \in V(G)$, let the {\em tail partner} of $v$, denoted by
$p_t(v)$, be the neighbor of $v$ such that $a(p_t(v))$ contains $t(v)$; and $t(p_t(v))$ is the last tail encountered in clockwise direction from $t(v)$ in $F$. Similarly, let the {\em head partner} of $v$, denoted by $p_h(v)$, be the neighbor of $v$ such that $a(p_h(v))$ contains $h(v)$; and $h(p_h(v))$ is the last head encountered in a counterclockwise direction from $h(v)$ in $F$. For the example in Figure~\ref{fig:2}, we have $p_h(v_1) = v_{11}$, $p_t(v_1) = v_2$, $p_h(v_2) = v_{11}$, and $p_t(v_2) = v_3$.

\subsection{The Algorithm
                              \label{section:outline-algo-arc}}
The algorithm for finding a minimum paired-dominating set of a
circular-arc graph $G$ is similar to the algorithm for interval graphs. An arc is {\em maximal} if it is not contained in any other arc of $F$. Suppose $W$ is the set of neighbors, $u$, of $v_1$ such that $a(u)$ is a maximal arc in $F$, i.e., $W = \{u \mid u \in N[v_1] $~and~$ a(u) $~is a maximal arc in~$ F\}$. Then, we can show that there exists a minimum paired-dominating set $S$ of $G$ such that $W \cap S \not = \emptyset$. If $W \cap S \not =
\emptyset$, we are done. Otherwise, let $v_x$ be a vertex in
$N[v_1] \cap S$ and $v_y$ be a vertex in $W$ such that $a(v_y)$
contains $a(v_x)$ in $F$. Clearly, $(S-\setof{v_x})\cup\setof{v_y}$ is also a minimum paired-dominating set of $G$.

Based on the above observation, we designed a two-step algorithm
for circular-arc graphs. First, the algorithm computes a paired-dominating set $S_i$ for each vertex $w_i \in W$, where $S_i$ is a minimum paired-dominating set among all paired-dominating sets that contain $w_i$. Then, a minimum paired-dominating set $S$ of $G$ is chosen from $S_1,\ldots,S_k$ with $k = |W|$. To find $S_i$, the
algorithm traverses the intersection mode $F$ of a circular-arc
graph $G$ in a clockwise direction. Lemma~\ref{lemma:initial-pair} below proves that there exists a minimum
paired-dominating set $S_i$ such that we have $p_t(w_i) \in
S_i$ or $p_h(w_i) \in S_i$. With the aid of the lemma, the
algorithm first computes two paired-dominating sets $S_i^t$ and
$S_i^h$ that contain the vertices $\setof{w_i, p_t(w_i)}$ and
$\setof{w_i, p_h(w_i)}$ respectively. If $|S_i^t| \le |S_i^h|$,
we have $S_i = S_i^t$; otherwise, we have $S_i = S_i^h$.

To explain the algorithm, we define some notations.
Let $w_1, w_2,\ldots, w_k$ be an ordering of vertices in $W$ such
that $h(w_1)$ is the last head encountered in a counterclockwise
direction from $h(v_1)$ in $F$ and $h(w_{i+1})$
immediately succeeds $h(w_{i})$ in a clockwise direction for $1\le
i \le k-1$. For a subset $S \subseteq V(G)$, we define $N[S] =
\setof{v \mid v \in N_G[u] \text{ and } u\in S}$. In addition, a
vertex $v$ in $V(G)-N[S]$ is said to be the {\em next undominated}
vertex with respect to $S$ if $t(v)$ is the first tail encountered
in a clockwise direction from $t(v_1)$.

Initially, the algorithm sets $S_i^t = \setof{w_i, p_t(w_i)}$ and
$S_i^h = \setof{w_i, p_h(w_i)}$. Then, it iteratively finds
the next undominated vertex $v$ with respect to
$S_i^t$ ($S_i^h$) and adds two vertices to $S_i^t$ ($S_i^h$) until
every vertex not in $S_i^t$ ($S_i^h$) is adjacent to a vertex in
$S_i^t$ ($S_i^h$). If $|S_i^t| \le |S_i^h|$, we have
$S_i = S_i^t$; otherwise, we have $S_i = S_i^h$. Finally, the algorithm selects a minimum paired-dominating set $S$ of $G$ from
$S_1,\ldots,S_k$ such that the cardinality of $S$ is the minimum. The steps of the algorithm are detailed below.

\medskip

\begin{algorithm}
\caption{Finding a minimum paired-dominating set in a circular-arc
graph}
\begin{algorithmic} [1]
\REQUIRE An intersection model $F$ of a circular-arc graph
$G$ with sorted endpoints.
\ENSURE A minimum paired-dominating set $S$ of
$G$.

\STATE let $W \leftarrow \{w \mid w \in N[v_1]$ and $a(w)$ is maximal\};
\STATE let $k\leftarrow |W|$;
\FOR  {each vertex $w_i \in W$}
\STATE let $S_i^t \leftarrow \setof{w_i, p_t(w_i)}$ and $S_i^h \leftarrow
\setof{w_i, p_h(w_i)}$;
\REPEAT
\STATE find the next undominated vertex $v$ with respect to $S_i^t$;
\STATE if $p_t(p_t(v)) \in S_i^t$, then let $S_i^t \leftarrow S_i^t \cup \setof{v,p_t(v)}$;
\STATE otherwise, let $S_i^t \leftarrow S_i^t \cup \setof{p_t(v),p_t(p_t(v))}$;
\UNTIL every vertex not in $S_i^t$ is adjacent to a vertex in $S_i^t$
\STATE repeat steps $5$ to $9$ to obtain $S_i^h$ by replacing $S_i^t$ with $S_i^h$;
\STATE if $|S_i^t| \le |S_i^h|$, then let $S_i \leftarrow S_i^t$; otherwise, let $S_i \leftarrow S_i^h$;
\ENDFOR \STATE choose $S$ from $S_1,\ldots,S_k$ such that the cardinality of $S$ is the minimum;
\RETURN $S$
\end{algorithmic}
\end{algorithm}


For the example in Figure~\ref{fig:2}, the
closed neighborhood of $v_1$ is $N_G[v_1] = \{v_1, v_2, v_{11}\}$
and we have $(w_1, w_2, w_3) = ( v_2, v_{11}, v_1)$. Then, by the
rules for finding $S^h_{i}$ and $S^t_{i}$ in Steps $3$ to $12$, we
have $S^h_{1} = \{v_2,v_{11},v_5,v_6,v_{9},v_{10}\}$, $S^t_{1} =
\{v_2,v_3,v_7,v_9\}$, $S^h_{2} =
\{v_{11},v_{10},v_5,v_6,v_{8},v_{9}\}$,  $S^t_{2} =
\{v_{11},v_2,v_5,v_6,v_{9},v_{10}\}$, $S^h_{3} =
\{v_{1},v_{11},v_5,v_6,v_{9},v_{10}\}$, and $S^t_{3} =
\{v_1,v_2,v_5,v_6,v_9,v_{10}\}$. Consequently, Step $11$ determines the sets $S_{1} =
\{v_2,v_3,v_7,v_9\}$, $S_{2} =
\{v_{11},v_2,v_5,v_6,v_{9},v_{10}\}$, and $S_{3} =
\{v_1,v_2,v_5,v_6,v_9,v_{10}\}$. Finally, Step
$13$ generates $S = \{v_2,v_3,v_7,v_9\}$, which is a
minimum paired-dominating set of $G$.

The properties of the following lemma are useful for finding a
minimum paired-dominating set and help us prove the correctness
of the algorithm.

\begin{lemma}\label{lemma:initial-pair}
Suppose $a(v)$ is a maximal arc in $F$ and $S_v$ is a minimum
paired-dominating set of $G$ among all the paired-dominating sets
that contains $v$. Then, there exists a minimum paired-dominating set $S_v$ such that we have $p_t(v) \in S_v$ or $p_h(v) \in
S_v$.
\end{lemma}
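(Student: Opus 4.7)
The plan is to start from an arbitrary minimum paired-dominating set $S_v$ containing $v$ and to perform at most one vertex exchange to produce a minimum paired-dominating set that also contains $p_t(v)$ or $p_h(v)$. If $p_t(v)\in S_v$ or $p_h(v)\in S_v$, there is nothing to prove, so assume neither lies in $S_v$. Since $G[S_v]$ admits a perfect matching and $v\in S_v$, let $u$ be the unique vertex of $S_v$ matched with $v$; then $u\in N_G(v)$. The idea is to swap $u$ for either $p_t(v)$ or $p_h(v)$ (chosen according to how $a(u)$ sits relative to $a(v)$) and rematch $v$ with the new vertex, leaving every other matched pair of $S_v$ intact.

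The case analysis is dictated by the relative position of $a(u)$ and $a(v)$. Since $a(v)$ is maximal and no single arc covers the whole circle, exactly one of the following holds: (i) $a(u)\subseteq a(v)$; (ii) $a(u)$ contains $t(v)$ but not $h(v)$; or (iii) $a(u)$ contains $h(v)$ but not $t(v)$. In case (i) we have $N_G[u]\subseteq N_G[v]$, so removing $u$ from $S_v$ loses no domination, and we can add $p_t(v)$ and match it with $v$. Cases (ii) and (iii) are symmetric under the head/tail involution, so it suffices to handle (ii), where we replace $u$ by $p_t(v)$.

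The main obstacle is the arc-geometric verification in case (ii): after the swap, every vertex previously covered by $u$ must remain covered, i.e., $N_G[u]\subseteq N_G[v]\cup N_G[p_t(v)]$. My plan is to establish the stronger containment $a(u)\subseteq a(v)\cup a(p_t(v))$. The maximality of $a(v)$ prevents any arc of a neighbor of $v$ from containing both $h(v)$ and $t(v)$, so both $a(u)$ and $a(p_t(v))$ contain $t(v)$ but not $h(v)$; thus, proceeding clockwise from $h(v)$, both $h(u)$ and $h(p_t(v))$ lie strictly between $h(v)$ and $t(v)$. Moreover, the defining property of $p_t(v)$, namely that its tail is the last tail encountered clockwise from $t(v)$ among neighbors of $v$ whose arcs contain $t(v)$, forces $t(u)$ to lie no later than $t(p_t(v))$ in this clockwise traversal. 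Combining these observations, $a(u)$ is contained in the clockwise arc from $h(v)$ to $t(p_t(v))$, which is exactly $a(v)\cup a(p_t(v))$. Consequently $(S_v\setminus\{u\})\cup\{p_t(v)\}$ is a dominating set; rematching $v$ with $p_t(v)$ preserves the perfect matching while all other pairs of $S_v$ stay untouched; and the cardinality is preserved because $p_t(v)\notin S_v$. This delivers the desired minimum paired-dominating set.
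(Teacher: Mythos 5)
Your proof takes essentially the same route as the paper's: keep $S_v$ intact except for the matched partner $u$ of $v$, and exchange $u$ for $p_t(v)$ or $p_h(v)$ according to which side of $a(v)$ the arc $a(u)$ extends, using the maximality of $a(v)$ to guarantee $N_G[u]\subseteq N_G[v]\cup N_G[p_t(v)]$ or $N_G[u]\subseteq N_G[v]\cup N_G[p_h(v)]$. In fact you supply the arc-geometric verification of this containment that the paper merely asserts, so the argument is correct and, if anything, more detailed than the original.
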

\begin{proof}
If $p_t(v) \in S_v$ or $p_h(v) \in S_v$, we are done;
otherwise, we assume that neither $p_t(v) \in S_v$ nor $p_h(v)
\in S_v$. Furthermore, let $v'$ be a vertex in $S_v$ such that a
perfect matching in $G[S_v]$ contains the edge $(v, v')$. Note
that $a(v)$ is a maximal arc in $F$. Hence, for each vertex $u \in
N_G(v)$, we have $N_G[v] \cup N_G[u] \subseteq N_G[v] \cup
N_G[p_t(v)]$ or $N_G[v] \cup N_G[u] \subseteq N_G[v] \cup
N_G[p_h(v)]$. For the case where $N_G[v] \cup N_G[v'] \subseteq
N_G[v] \cup N_G[p_t(v)]$, it is clear that $(S_v-\{v'\}) \cup
\{p_t(v)\}$ is a minimum paired-dominating set of $G$. Similarly,
for the case where $N_G[v] \cup N_G[v'] \subseteq N_G[v] \cup
N_G[p_h(v)]$, it isclear that $(S_v-\{v'\}) \cup \{p_h(v)\}$ is a
minimum paired-dominating set of $G$. The lemma then follows.
\end{proof}

\bigskip

Based on Lemma~\ref{lemma:initial-pair}, we are ready to prove
the following lemma, which provides the correctness of the
algorithm.

\begin{lemma}\label{lemma:Circular-arc-2}
Given an intersection model $F$ of a circular-arc graph $G$ with
sorted endpoints, Algorithm 2 outputs a minimum paired-dominating
set $S$ of $G$.
\end{lemma}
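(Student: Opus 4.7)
The plan is to prove correctness in two layers: an outer argument about the choice of starting pair, and an inner argument about the greedy clockwise sweep once that pair is fixed.

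For the outer layer, I first observe (as sketched in the paragraph preceding the algorithm) that some vertex of $W$ belongs to some minimum paired-dominating set of $G$. Indeed, any minimum paired-dominating set $Z$ must intersect $N_G[v_1]$; if the intersecting vertex $v_x$ is not maximal in $F$, then choosing a maximal arc $v_y \in W$ with $a(v_y) \supseteq a(v_x)$ and replacing $v_x$ by $v_y$ preserves both the cardinality and the paired-domination property, because $N_G[v_x] \subseteq N_G[v_y]$ and the matching edge incident to $v_x$ can be rerouted to $v_y$. Applying Lemma~\ref{lemma:initial-pair} to this $w_i \in W$ then yields a minimum paired-dominating set containing either $\{w_i, p_t(w_i)\}$ or $\{w_i, p_h(w_i)\}$, which is precisely one of the initial pairs tried by Algorithm~2.

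For the inner layer, fix $w_i \in W$ and consider the computation of $S_i^t$ (the argument for $S_i^h$ is symmetric, with the sweep running counterclockwise). Once the pair $\{w_i, p_t(w_i)\}$ is fixed, $a(w_i) \cup a(p_t(w_i))$ covers a contiguous arc of the circle, and the portion still to be dominated is the clockwise sweep from $t(p_t(w_i))$ back to $h(w_i)$, which behaves like a linear arrangement. I would therefore mimic the induction of Lemma~\ref{lemma:interval-correctness}: letting $S^t_{i,k}$ denote the state of $S_i^t$ after $k$ iterations and $Z_k$ the corresponding prefix of a minimum paired-dominating set containing the same starting pair, I show by induction on $k$ that $A(S^t_{i,k}) \ge A(Z_k)$ in the cyclic order, invoking Lemma~\ref{lemma:partial_dom} reinterpreted on the unrolled linear arrangement to push the induction through. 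The branching in Steps~7--8 is a bookkeeping refinement: when $p_t(p_t(v))$ already lies in $S_i^t$ (this happens precisely when the greedy double hop wraps near $w_i$), the algorithm instead adds only the still-needed matching partner $v$, preserving both the perfect matching in $G[S_i^t]$ and the greedy reach.

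Combining the two layers: Algorithm~2 enumerates all $w_i \in W$ and both initial pair choices per $w_i$, so it examines the combination guaranteed by the outer argument; Step~13 then selects the smallest $S_i$, so $|S| \le |Z|$ for every minimum paired-dominating set $Z$, and $S$ is itself a paired-dominating set of $G$ by construction. The main obstacle will be the careful adaptation of Lemma~\ref{lemma:partial_dom} to the circular setting together with the wrap-around analysis: I must verify that cutting the circle at the chosen starting pair truly reduces the remaining task to a linear instance, that the sweep terminates correctly just before re-encountering $w_i$, and that the branch distinction in Steps~7--8 keeps the induced subgraph perfectly matched in both branches. These verifications are where the bulk of the technical work lies.
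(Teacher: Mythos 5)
Your proposal follows essentially the same route as the paper: the maximal-arc exchange argument showing some minimum paired-dominating set meets $W$, the reduction via Lemma~\ref{lemma:initial-pair} to the two starting pairs $\{w_i, p_t(w_i)\}$ and $\{w_i, p_h(w_i)\}$, and then the potential-function induction $A(S^t_{ij}) \ge A(Z_{ij})$ adapted from Lemma~\ref{lemma:interval-correctness} (whose details the paper likewise omits as ``similar''). One minor correction: Algorithm~2 computes $S_i^h$ with the same clockwise sweep, only changing the initial pair, rather than a counterclockwise sweep.
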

\begin{proof}
Clearly, $S$ is a paired-dominating set of $G$. To prove that
$S$ is a minimum paired-dominating set of $G$, it suffices to show
that, for each vertex $w_i \in W$, $S_i$ is a minimum paired-dominating set of $G$ among all the paired-dominating sets that contain $w_i$. According to Lemma~\ref{lemma:initial-pair}, there exists a minimum paired-dominating set $Z_i$ of $G$ among all the
paired-dominating sets containing $w_i$ such that we have
$p_t(w_i) \in Z_i$ or $p_h(w_i) \in Z_i$. Below, we only show that
the cardinality of $S_i$ is the minimum when $p_t(w_i) \in Z_i$. The
proof for the case where $p_h(w_i) \in Z_i$ is similar.

Let $S_i^t=\{v_{s_0}, v_{s_{0'}},\ldots, v_{s_x},v_{s_{x'}}\}$ be
a paired-dominating set of $G$ such that $(v_{s_0},
v_{s_{0'}})=(w_i,p_t(w_i))$, and let the vertices $v_{s_j}$ and
$v_{s_{j'}}$ be added to $S_i^t$ in the $j$th iteration of the
repeat-loop for $1 \le j \le x$. In addition, let $Z_i =\{v_{z_0}, v_{z_{0'}}, \ldots, v_{z_y},v_{z_{y'}}\}$ be a minimum paired-dominating set of $G$ such that $(v_{z_0}, v_{z_{0'}}), \ldots,
(v_{z_y},v_{z_{y'}})$ is a perfect matching of $G[Z_i]$ and
$z_{j-1} < z_{j}$ for $1 \le i \le y$; let $S_{ij}^t =
\{v_{s_0}, v_{s_{0'}},\ldots, v_{s_j},v_{s_{j'}}\}$; and $Z_{ij} =
\{v_{z_0}, v_{z_{0'}},\ldots, v_{z_j},v_{z_{j'}}\}$. For a subset
$R \subseteq V(G)$ and a vertex $w_i\in W$, we define that $A(R) = \max\{g \mid (1)~v_g \in R; \text{~or~} (2)~v_g \text{~is adjacent to a~}$ $\text{vertex~} v_j\in R \text{~and~} a(v_g)$ $\text{does not contain~} h(w_i)\}$. Hence, to prove that $S_i$ is a minimum
paired-dominating set of $G$ among all the paired-dominating sets
contain $w_i$, it suffices to show that $A(S_{ij}^t) \ge
A(Z_{ij})$ for $0 \le j \le y$. Clearly, the above statement can
be proved by induction on $j$. We omit the details of the proof
because they are similar to the arguments used to derive
Lemma~\ref{lemma:interval-correctness}.
\end{proof}

\medskip
\subsection{An $O(n)$-time implementation of Algorithm $2$
                              \label{section:implement-arc}}
The procedure for finding $S_i$ can be implemented in $O(n)$ time
by modifying Algorithm $1$ for interval graphs. Therefore, a naive
implementation of algorithm $2$ has a time complexity of $O(kn)$
with $k = |W|$. However, by exploiting the elegant properties of
circular-arc graphs, we can design a useful data structure that
helps us find all paired-dominating sets $S_1,S_2,\ldots,S_k$ in
$O(n)$ time. Hence, the time complexity of the algorithm for
finding a minimum paired-dominating set of a circular-arc graph
can be improved to $O(n)$.

To find a minimum paired-dominating set $S$ in $G$, algorithm $2$
first obtains the paired-dominating sets
$S^t_1,S^t_2,\ldots,S^t_k$ and $S^h_1,S^h_2,\ldots,S^h_k$. Then,
the set $S$ with the minimum cardinality is chosen from the sets.
In the following, we only consider an $O(n)$-time implementation
for finding all the paired-dominating sets
$S^t_1,S^t_2,\ldots,S^t_k$. Using a similar method, it can can be
shown that $S^h_1,S^h_2,\ldots,S^h_k$ can also be obtained in
$O(n)$ time. For simplicity, we denote $S^t_i$ and $p_t(v)$ by
$S_i$ and $p(v)$ respectively, in the remainder of this section.

For each vertex $w_i \in W$, the algorithm constructs a
paired-dominating set $S_i$ containing $w_i$ and $p(w_i)$. Suppose
that $S_i = \{w_{i_{0}}, w_{i_{0'}},\ldots,
w_{i_{\ell}},w_{i_{\ell'}}\}$, where $(w_{i_{0}},
w_{i_{0'}})=(w_i,p(w_i))$, and that the vertices $w_{i_{j}}$ and
$w_{i_{j'}}$ are added to $S_i$ in the $j$th iteration of the
repeat-loop for $1 \le j \le \ell$. In addition, let
$S_x=\{w_{x_{0}}, w_{x_{0'}},\ldots, w_{x_{a}},w_{x_{a'}}\}$ and
$S_y=\{w_{y_{0}}, w_{y_{0'}},\ldots, w_{y_{b}},w_{y_{b'}}\}$ be
two paired-dominating sets of $G$ such that $S_x, S_y \in
\setof{S_1,S_2,\ldots,S_k}$. It is clear that if $(w_{x_{e}},
w_{x_{e'}})= (w_{y_{f}}, w_{y_{f'}})$, then $(w_{x_{(e+1)}},
w_{x_{(e+1)'}}) = (w_{y_{(f+1)}}, w_{y_{(f+1)'}}),(w_{x_{(e+2)}},
w_{x_{(e+2)'}})= (w_{y_{(f+2)}}, w_{y_{(f+2)'}}),\ldots,$
$(w_{x_{(e+g)}}, w_{x_{(e+g)'}})= (w_{y_{(f+g)}},
w_{y_{(f+g)')}}$, where $g = \min \setof{a-e,b-f}-1$. Based on the
above observation, we define a digraph $D$ to improve the
complexity of the algorithm from $O(kn)$ to $O(n)$.

\vspace{14pt}

\begin{figure}[htb]
\centerline{\input{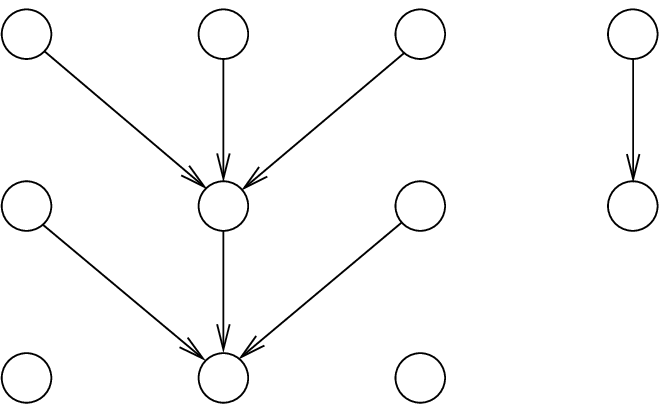}} \caption{The corresponding digraph $D$
for the circular-arc graph $G$ in Figure~\ref{fig:2}$(b)$.}
\label{fig:3}
\end{figure}

Recall that a vertex $v$ in $V(G)-N[S]$ is deemed the next
undominated vertex with respect to a subset $S \subseteq V(G)$ if
$t(v)$ is the first tail encountered in a clockwise direction from
$t(v_1)$. We define $succ((v,p(v))) = (p(u),p(p(u)))$ for
$v\in V(G)$, where $u$ is the next undominated vertex with respect
to $\setof{v,p(v)}$. Let $D = (V(D), E(D))$ be a digraph such that

\vspace{-30pt}
\begin{align*}
V(D) &= \setof{(v,p(v))\mid v\in V(G)}; \text{~and}\\
E(D) &= \{((v,p(v))\rightarrow (u,p(u)) \mid (u,p(u)) = succ(v,p(v)) \text{ and } u, p(u)\not \in N[v_1]\}.
\end{align*}

\noindent Figure~\ref{fig:3} shows an example of the corresponding
digraph $D$ for the circular-arc graph $G$ in
Figure~\ref{fig:2}$(b)$.

Next, we show that the graph $D$ is a directed forest graph that
can be constructed in $O(n)$ time. By the definition of $E(D)$,
there exists no vertex $(v, p(v))$ in $D$ such that the in-degree
of $(v, p(v))$ is greater than or equal to $1$; and
$\setof{v,p(v)} \cap N[v_1] \not = \emptyset$. Meanwhile, because
every cycle $C = (c_1, c_2, \ldots, c_q) $ in $D$ must contain
such a vertex $c_i = (v, p(v))$, $D$ is a directed forest graph.
The $O(n)$-time procedure used to construct $D$ is as follows.
Using similar arguments to those presented in
Section~\ref{section:implement-int}, it can be shown that, with
$O(n)$-time preprocessing, $p(v)$ and $succ(v,p(v))$ can be
determined in $O(1)$ time for each vertex $v \in V(G)$. Moreover,
because the out-degree of each vertex in $V(D)$ is at most one and
$|V(D)| = n$, the digraph $D$ can be constructed in $O(n)$ time.

Let $P_i$ denote the maximal directed path in $D$ starting from
$(w_i, p(w_i))$ for each vertex $w_i \in W$, and let $\bar{P_i} =
\setof{v,u \mid (v,u)\in V(P_i)}$. The next lemma provides an
important property that can be used to derive the
paired-domination sets $S_i$ from maximal directed path $P_i$ for $1 \le i \le k$.

\begin{lemma}\label{lemma:implementation-2}
Suppose that $P_i$ is the maximal directed path in $D$ starting
from $(w_i, p(w_i))$ and $\bar{P_i} = \setof{v,u \mid (v,u)\in
V(P_i)}$. Then, we have $\bar{P_i} \subseteq S_i$ and
$|S_i|-|\bar{P_i}| \le 4$ for $1 \le i \le k$.
\end{lemma}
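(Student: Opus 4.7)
The plan is to establish the two assertions separately: the inclusion $\bar{P_i}\subseteq S_i$ by induction along the path $P_i$, and the size bound $|S_i|-|\bar{P_i}|\le 4$ by a termination case analysis.

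For $\bar{P_i}\subseteq S_i$, I would induct on the position of a vertex $(v,p(v))$ along $P_i$. The base case is the starting vertex $(w_i,p(w_i))$, and both $w_i$ and $p(w_i)$ are placed in $S_i$ by the initialization in step~4 of Algorithm~2. For the inductive step, suppose $(v,p(v))\in V(P_i)$ with $\{v,p(v)\}\subseteq S_i$ and that $((v,p(v))\to(u,p(u)))\in E(P_i)$, so that $(u,p(u))=succ((v,p(v)))$ and, by the definition of $E(D)$, neither $u$ nor $p(u)$ lies in $N[v_1]$. I would then argue two things about the iteration of the repeat-loop that immediately follows the addition of $\{v,p(v)\}$: (a) the next undominated vertex with respect to the current $S_i$ coincides with the next undominated vertex with respect to the singleton pair $\{v,p(v)\}$ used in the definition of $succ$, because the earlier pairs placed in $S_i$ dominate only arcs strictly preceding $(v,p(v))$ in the clockwise sweep from $t(v_1)$; and (b) the test in step~7 fails and step~8 fires, because $p(u)\notin N[v_1]$ forces $p(u)\notin S_i$ at this moment. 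Together these give that the algorithm adds exactly $\{u,p(u)\}$, completing the induction.

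For $|S_i|-|\bar{P_i}|\le 4$, I would inspect why $P_i$ is maximal. Let $(v,p(v))$ be the last vertex of $P_i$. Maximality means one of three things: the repeat-loop has already terminated with no undominated vertex remaining, in which case nothing further is added; or the condition in step~7 holds at the next iteration (so the algorithm inserts a pair of the form $\{\tilde u,p(\tilde u)\}$ instead of following the $D$-successor); or $succ((v,p(v)))=(u,p(u))$ exists but one of $u,p(u)$ lies in $N[v_1]$. In the latter two cases I would show that the union of the closed neighborhoods of the vertices of $\bar{P_i}$, together with one additional pair, already reaches back into $N[v_1]$ and completes a full clockwise traversal of the circle. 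Because the arcs of $W$ are maximal and the pairs $p_t$ march strictly clockwise, any still-undominated vertex after one such post-path iteration must lie in a small arc adjacent to $v_1$, which a single further iteration is enough to absorb. Adding up: at most two post-path iterations, each contributing two vertices, for a total of $4$.

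The main obstacle I expect is the second part --- specifically, the careful bookkeeping needed to argue that no more than two additional iterations of the repeat-loop occur after $P_i$ ends. One has to rule out the scenario in which the wrap-around region near $v_1$ requires three or more extra pairs, by invoking the maximality of $a(w_i)$ (and of each head-arc in $W$) together with the fact that both $w_i$ and $p(w_i)$ are already in $S_i$ from the start, so the segment of the circle whose domination is still in doubt after $P_i$ terminates is strictly shorter than the span of a single maximal pair. The first part, by contrast, is essentially a tracing argument that mirrors the correctness proof of Algorithm~1.
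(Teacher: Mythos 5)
Your overall strategy is the same as the paper's: the inclusion $\bar{P_i}\subseteq S_i$ is argued by following the path, and the bound $|S_i|-|\bar{P_i}|\le 4$ comes from showing that after $P_i$ ends at most two further iterations of the repeat-loop occur, each contributing one pair. (The paper phrases the second half slightly differently --- it observes that $\bar{P_i}$ together with the single pair $succ((w_{i_\ell},w_{i_{\ell'}}))$ already dominates $V(G)-N[v_1]$, so only the wrap-around region near $v_1$ can force one more pair --- but this is the same accounting you do.) You supply considerably more detail than the paper, which dismisses the first half as ``clear from the definitions.''

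Two justifications in your write-up are not valid as stated, though the claims they support are the ones actually needed. First, in step (b) you argue that step~7 fails because ``$p(u)\notin N[v_1]$ forces $p(u)\notin S_i$ at this moment.'' That implication is false: $S_i$ contains vertices outside $N[v_1]$ from the very first iteration ($p_t(w_i)$ need not lie in $N[v_1]$, and none of the later pairs do). The correct reason $p(u)\notin S_i$ is monotonicity --- the tails of the pairs added so far all lie counterclockwise of $t(\tilde u)$ for the current next undominated vertex $\tilde u$, and the only way an earlier arc could reach clockwise past that frontier is by wrapping through $h(v_1)$, which is exactly what the $N[v_1]$ condition in $E(D)$ excludes. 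Second, in your termination case analysis, case (ii) (``step~7 fires at the next iteration'') is not a reason $P_i$ can end: the edges of $D$ are defined by $succ$ and the $N[v_1]$ condition alone, independently of how the algorithm's branch test resolves, so maximality of $P_i$ is equivalent to the successor being undefined or having a component in $N[v_1]$. Once (b) is repaired as above, step~7 provably cannot fire while the path is being traced, so this case disappears rather than needing to be absorbed into the budget of two extra iterations. Neither issue changes the conclusion, and the paper itself does not address either point.
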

\begin{proof}
Because $D$ is a directed forest graph, it is clear from the definitions of $S_i$ and $P_i$ that  $\bar{P_i} \subseteq S_i$. Suppose $\bar{P_i}  = \{w_{i_{0}}, w_{i_{0'}},\ldots,
w_{i_{\ell}},w_{i_{\ell'}}\}$ such that $(w_{i_{0}},
w_{i_{0'}})=(w_i,p(w_i))$ and $(w_{i_{j}},w_{i_{j'}}) =
succ((w_{i_{(j-1)}},w_{i_{(j-1)'}}))$ for $1 \le j \le \ell$.
Then, by the definition of $P_i$, we can verify that the vertices
in $V(G) - N[v_1]$ are dominated by $\bar{P_i} \cup
\setof{w_{i_{(\ell+1)}},w_{i_{(\ell+1)'}}}$, where
$(w_{i_{(\ell+1)}},w_{i_{(\ell+1)'}}) =
succ((w_{i_{\ell}},w_{i_{\ell'}}))$. This implies that we have
$|S_i|-|\bar{P_i}| \le 4$ as desired.
\end{proof}

\medskip

In the following, we show the paired-domination sets $S_1, S_2,\ldots, S_k$ can be obtained in $O(n)$ time by exploiting Lemma~\ref{lemma:implementation-2}. First, we determine whether or not the set $\bar{P_i}$ is a paired-domination set of $G$ for $1 \le i \le k$. If the answer is positive, we set $S_i = \bar{P_i}$. Otherwise, we set $S_i = \bar{P_i}$ and augment $S_i$ with the vertices $p(u)$ and $p(p((u))$ until $S_i$ becomes a
paired-domination set of $G$, where $u$ is the next undominated
vertex with respect to $S_i$. According to
Lemma~\ref{lemma:implementation-2}, the augmentation will occur twice at most. This implies that the augmentation can be
completed in $O(1)$ time for each vertex $w_i \in W$. Furthermore,
because $\bar{P_i} \subseteq S_i$ and the digraph $D$ can be
constructed in $O(n)$ time, the paired-dominating sets
$S_1,S_2\ldots,S_k$ can be obtained in $O(n)$ time. Now that $D$
is a directed forest, the length of $P_i$ in $D$ can be determined
in $O(n)$ time by running a depth first search algorithm on all
vertices $(v, p(v))$ in $\bar{D} = (V(\bar{D}), E(\bar{D}))$ such
that the in-degree of $(v, p(v))$ is equal to $0$, where
$V(\bar{D}) = V(D)$ and $E(\bar{D}) = \setof{ ((v,p(v))\rightarrow
(u,p(u)) \mid ((u,p(u))\rightarrow (v,p(v)) \in E(D)} $. It
follows that a minimum paired-dominating set $S$ of $G$ can be
chosen from $S_1,\ldots,S_k$ in $O(n)$ time.

Combining Lemma~\ref{lemma:Circular-arc-2} and above discussion,
we have the following theorem.

\begin{theorem} \label{theorem:interval} Given an intersection
model $F$ of a circular-arc graph $G$ with sorted endpoints,
algorithm $2$ outputs a minimum paired-dominating set $S$
of $G$ in $O(n)$ time.
\end{theorem}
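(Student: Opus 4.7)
The plan is to reduce the theorem to a running-time analysis, since correctness of Algorithm~2 has already been established in Lemma~\ref{lemma:Circular-arc-2}. So my first step would be simply to invoke that lemma and observe that it only remains to exhibit an $O(n)$-time implementation, for which Section~\ref{section:implement-arc} has already laid out the necessary structural ingredients.

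For the running-time bound, I would first argue that all the auxiliary maps required by Algorithm~2 can be tabulated in $O(n)$ time. Specifically, $p_t(v)$, $p_h(v)$, and the successor function $succ$ on pairs $(v,p_t(v))$ (together with its head analogue) can be computed by clockwise sweeps of the $2n$ endpoints of $F$, using preprocessing analogous to the $p$ and $next$ computations developed for interval graphs in Section~\ref{section:implement-int}. Once these tables are in place, the vertex set $V(D)=\setof{(v,p_t(v))\mid v\in V(G)}$ and the edge set $E(D)$ can be emitted in $O(n)$ time, since each vertex of $D$ has out-degree at most one; the computation of the set $W$ and its ordering $w_1,\ldots,w_k$ is also easy to fold into the same sweep.

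The crucial step is to avoid the naive $O(kn)$ cost of running the repeat-loop independently for each $w_i\in W$. Here I would exploit the forest structure of $D$: by the definition of $E(D)$, any vertex $(v,p_t(v))$ with $\setof{v,p_t(v)}\cap N[v_1]\neq\emptyset$ has in-degree $0$, and every directed cycle in $D$ would have to pass through such a vertex, so $D$ must be acyclic. Hence $D$ is a directed forest. Reversing its edges yields $\bar D$, on which a single depth-first search from the roots labels every vertex $(v,p_t(v))$ with the length of the maximal directed path starting at it in $D$. This DFS runs in $O(n)$ total time because $|V(D)|+|E(D)|=O(n)$, and in particular it delivers $|\bar{P_i}|$ for every $w_i\in W$ within that budget.

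Finally, Lemma~\ref{lemma:implementation-2} gives $|S_i|-|\bar{P_i}|\le 4$, so after the DFS each $|S_i|$ is determined in $O(1)$ additional time using at most two more $succ$ queries; doing the same for the head family $S_i^h$ is symmetric. I would then select the index $i^{*}$ minimizing $|S_{i^{*}}|$ across both families in $O(k)=O(n)$ time, and emit the winning set explicitly by walking along $P_{i^{*}}$ and appending its at most two augmenting pairs, again in $O(n)$ time. The hard part will be the aggregation argument: distinct maximal paths $P_i$ can share long suffixes, so traversing each one independently would cost $\Omega(kn)$ in the worst case, and it is precisely the directed-forest property of $D$ that lets a single amortized sweep of $\bar D$ replace $k$ separate walks and collapse the overall complexity to $O(n)$.
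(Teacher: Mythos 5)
Your proposal is correct and follows essentially the same route as the paper: correctness is delegated to Lemma~\ref{lemma:Circular-arc-2}, the $O(n)$ preprocessing of $p_t$, $p_h$, and $succ$ mirrors Section~\ref{section:implement-int}, and the key $O(kn)\to O(n)$ reduction uses exactly the paper's directed-forest digraph $D$, the reversed-DFS computation of the path lengths $|\bar{P_i}|$, and the at-most-two-augmentations bound of Lemma~\ref{lemma:implementation-2}. No substantive differences to report.
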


\section{Concluding Remarks  \label{section:conclusion}}
We have proposed two algorithms for the paired-domination problem in interval graphs and circular-arc graphs respectively. The algorithm for interval graphs produces a minimum paired-dominating set incrementally in a greedy manner. We extended the results to design the algorithm for circular-arc graphs. If the input graph is comprised of a family of $n$ arcs, both algorithms can be implemented in $O(n \log n)$ time. However, if the endpoints of the arcs are sorted, both algorithms only require $O(n)$ time. These results are optimal within a constant factor.

Finally, we consider some open questions related to the paired-domination problem. It would be interesting to investigate the weighted analogue of this problem, i.e., to compute a minimum weight paired-dominating set in which each vertex is associated with a weight. Furthermore, many optimization problems are NP-complete if they are defined on general graphs; and they are solvable in polynomial time if they are defined on some special classes of graphs, such as bounded treewidth graphs, co-comparability graphs, and distance-hereditary graphs. Therefore, it would also be interesting to design polynomial-time algorithms for these graph classes. In addition, it would be useful if we could develop a polynomial-time approximation algorithm for general graphs; or prove that the problem remains NP-complete in planar graphs and devise a polynomial-time approximation scheme for it.

\small
\bibliographystyle{abbrv}
\bibliography{PDom}
\end{document}